\newtheorem{theorem}{Theorem}
\newtheorem{lemma}[theorem]{Lemma}
\newtheorem{corollary}[theorem]{Corollary}
\newtheorem{definition}[theorem]{Definition}
\newtheorem{remark}[theorem]{Remark}
\newenvironment{proof}[1][Proof]{\begin{trivlist}
\item[\hskip \labelsep {\bfseries #1}]}{\end{trivlist}}
\newcommand{\qed}{\nobreak \ifvmode \relax \else
      \ifdim\lastskip<1.75em \hskip-\lastskip
      \hskip1.75em plus0em minus0.75em \fi \nobreak
      \vrule height0.75em width0.50em depth0.25em\fi}
\begin{document}

\begin{frontmatter}

\title{A Predictive Model using the Markov Property}
\runtitle{SIFM}

\author{\fnms{Robert A.} \snm{Murphy, 
Ph.D.}\ead[label=e1]{robert.a.murphy@wustl.edu}}
\address{\printead{e1}}

\runauthor{Murphy}

\begin{abstract}
Given a data set of numerical values which are sampled from some unknown probability distribution, we will show how to check if the data set exhibits the Markov property and we will show how to use the Markov property to predict future values from the same distribution, with probability 1.
\end{abstract}


\begin{keyword}
\kwd{markov property}
\end{keyword}

\end{frontmatter}


\section{The Problem}
\label{app}

\subsection{Problem Statement}

Given a data set consisting of numerical values which are sampled from some unknown probability distribution, we want to show how to easily check if the data set exhibits the Markov property, which is stated as a sequence of dependent observations from a distribution such that each successive observation only depends upon the most recent previous one.  In doing so, we will present a method for predicting bounds on future values from the same distribution, with probability 1.

\subsection{Markov Property}
\label{prop}

Let $I \subseteq \mathbb{R}$ be any subset of the real numbers and let $T \subseteq I$ consist of \textit{times} at which a numerical distribution of data is randomly sampled.  Denote the random samples by a sequence of random variables $\{X_{t}\}_{t \in T}$ taking values in $\mathbb{R}$.  Fix $t_{0} \in T$ and define $T_{0} = \{t \in T : t > t_{0}\}$ to be the subset of times in $T$ that are greater than $t_{0}$.  Let $t_{1} \in T_{0}$.

\begin{definition}
The sequence $\{X_{t}\}_{t \in T}$ is said to exhibit the \textbf{Markov Property}, if there exists a measureable function $Y_{t_{1}}$ such that
\begin{equation}
\label{XY}
X_{t_{1}} = Y_{t_{1}}(X_{t_{0}})
\end{equation}
\end{definition}
for all sequential times $t_{0},t_{1} \in T$ such that $t_{1} \in T_{0}$.

\subsection{Elementary Properties for a Specific Choice of Y}
\label{specific}

Let $T$ be defined as in section $(\ref{prop})$.  For $t \in T$, let $\epsilon_{t}$ define a sequence of independent, identically, normally distributed random variables, with $0$-mean and common, constant variance, $\sigma^{2}$.  Given $t_{1} \in T_{0}$, define a specific choice of the measureable function $Y_{t_{1}}$ to be such that
\begin{equation}
\label{YX}
Y_{t_{1}}(X_{t_{0}}) = X_{t_{0}} + \epsilon_{t_{1}}.
\end{equation}

Extending the setup in section $(\ref{prop})$, let $t_{2} < ... < t_{k} < t_{k+1} < ...$ be a sequence of times in $T$ with corresponding subsets $T_{1} \supseteq T_{2} \supseteq ... \supseteq T_{k} \supseteq T_{k+1} \supseteq ...$, where $t_{0} < t_{1} < t_{2}$ and $T_{0} \supseteq T_{1}$ such that
\begin{equation}
\label{eYX}
Y_{t_{k+1}}(X_{t_{k}}) = X_{t_{k}} + \epsilon_{t_{k+1}}.
\end{equation}

\begin{lemma}
\label{Z}
Fix finite $K \ge 1$ and let $T = \{t_{0},t_{1},...,t_{K}\}$ be restricted to being a finite set.  Then, for $k \in \{1,...,K\}$, it is true that
\begin{equation}
\label{X1}
X_{t_{k}} = X_{t_{0}} + Z_{k},
\end{equation}
\end{lemma}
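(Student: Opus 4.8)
The plan is to prove the identity by induction on $k$, feeding the recursive relation (\ref{eYX}) into itself and collecting the accumulated noise terms into the single random variable $Z_{k}$. For the base case $k = 1$, the Markov property (\ref{XY}) together with the specific choice (\ref{YX}) gives
\begin{equation}
X_{t_{1}} = Y_{t_{1}}(X_{t_{0}}) = X_{t_{0}} + \epsilon_{t_{1}},
\end{equation}
so setting $Z_{1} = \epsilon_{t_{1}}$ establishes (\ref{X1}) for $k = 1$.

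For the inductive step, suppose that for some $k$ with $1 \le k < K$ we have $X_{t_{k}} = X_{t_{0}} + Z_{k}$, where $Z_{k}$ is a measurable random variable. Since $t_{0} < t_{1} < \cdots$ and $T_{0} \supseteq T_{1} \supseteq \cdots$, the index $t_{k+1}$ lies in $T_{k}$, so (\ref{eYX}) applies at stage $k$ and yields
\begin{equation}
X_{t_{k+1}} = Y_{t_{k+1}}(X_{t_{k}}) = X_{t_{k}} + \epsilon_{t_{k+1}} = X_{t_{0}} + (Z_{k} + \epsilon_{t_{k+1}}).
\end{equation}
Defining $Z_{k+1} = Z_{k} + \epsilon_{t_{k+1}}$ completes the induction, and unwinding the recursion identifies the object in (\ref{X1}) explicitly as the partial sum $Z_{k} = \sum_{j=1}^{k}\epsilon_{t_{j}}$ for every $k \in \{1,\ldots,K\}$. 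I would then record, for later use, that because the $\epsilon_{t_{j}}$ are independent and normally distributed with $0$-mean and variance $\sigma^{2}$, the sum $Z_{k}$ is itself normally distributed with mean $0$ and variance $k\sigma^{2}$; this distributional fact is what will eventually drive the probability-$1$ bounds on future values.

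The main obstacle here is bookkeeping rather than analysis. One must check at each step that the recursion (\ref{eYX}) is legitimately invoked, i.e.\ that $t_{k+1} \in T_{k}$, which follows from the prescribed ordering of the times and the nesting of the sets $T_{k}$; and one must be explicit about in what sense $Z_{k}$ is \emph{the} quantity appearing in (\ref{X1})---namely the accumulated noise $\sum_{j=1}^{k}\epsilon_{t_{j}}$, which is measurable as a finite sum of measurable functions. The finiteness of $K$ plays no essential analytic role; it merely guarantees that the induction terminates and that each $Z_{k}$ is a finite sum.
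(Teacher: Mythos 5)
Your proof is correct and follows essentially the same route as the paper: the paper's phrase ``successive application of eq.~(\ref{eYX})'' is exactly the induction you spell out, arriving at $Z_{k} = \sum_{j=1}^{k}\epsilon_{t_{j}}$ and then invoking independence and normality of the $\epsilon_{t_{j}}$ to conclude $Z_{k}$ is normal with mean $0$ and variance $k\sigma^{2}$. Your version is simply more explicit about the bookkeeping (that $t_{k+1}\in T_{k}$ so the recursion applies), which the paper leaves implicit.
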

where $Z_{k}$ is a normally distributed random variable, with $0$-mean and variance, $k\sigma^{2}$.

\begin{proof}
For each $k \in \{1,...,K\}$, successive application of eq. $(\ref{eYX})$ yields
\begin{equation}
\label{X2}
X_{t_{k}} = X_{t_{0}} + \sum_{j=1}^{k}\epsilon_{t_{j}}.
\end{equation}
Since $\epsilon_{t_{j}}$ are independent for all $j \in \{1,...,k\}$, then the random variable
\begin{equation}
\label{Z1}
Z_{k} = \sum_{j=1}^{k}\epsilon_{t_{j}}
\end{equation}
is easily shown to be normally distributed, with $0$-mean and variance, $k\sigma^{2}$, by the method of characteristics $\cite{Hogg}$.\qed
\end{proof}

\begin{remark}
\label{error}
By lemma $(\ref{Z})$, if we have a set of $K$ ordered, historical values, its error sequence $\{\epsilon_{k}\}_{k = 1}^{K}$ is estimated by $\epsilon_{k} = X_{t_{k}} - X_{t_{k-1}}$ for all $k \in \{1,...,K\}$.
\end{remark}

\begin{remark}
Also by lemma $(\ref{Z})$,  if the sampling times are re-labeled so that the last value in the ordering is re-labeled as $x_{0}$ corresponding to time $t_{0}$, then future values at times $k \ge 1$ are the sum of $x_{0}$ and $k$ independent, white noise (Gaussian) disturbances.  Thought of as a random walk, each new step in the \textit{``walk''} is simply a white noise disturbance.
\end{remark}

\begin{corollary}
\label{growth}
Let $\Omega$ be the sample space consisting of future values sampled from the probability distribution, $P$, such that $|\Omega| = H$.  Then, $P\bigg(|X_{t_{k}} - X_{t_{0}}| \le \sqrt{k}\sigma\bigg) = 1$ on $\Omega$, for all $k \in \{1,...,H\}$. 
\end{corollary}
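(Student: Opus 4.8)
The plan is to reduce Corollary~\ref{growth} to Lemma~\ref{Z} and then to use the finiteness of $\Omega$ to turn a statement about the standard deviation of $Z_{k}$ into an almost‑sure bound. Identifying the finite index set of Lemma~\ref{Z} with the $H$ future sampling times (so $K = H$), eq.~$(\ref{X1})$ gives $X_{t_{k}} - X_{t_{0}} = Z_{k}$ for each $k \in \{1,\dots,H\}$, where $Z_{k}$ is normal with mean $0$ and variance $k\sigma^{2}$. Since $|X_{t_{k}} - X_{t_{0}}| = |Z_{k}|$, it is enough to prove that $P\bigl(|Z_{k}| \le \sqrt{k}\,\sigma\bigr) = 1$, i.e. that on $\Omega$ no realization of $Z_{k}$ lies more than one standard deviation from its mean.

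First I would record the (population) variance of $Z_{k}$ as an average over the $H$ outcomes: because $Z_{k}$ has mean zero and $|\Omega| = H$,
\[
k\sigma^{2} \;=\; \frac{1}{H}\sum_{\omega \in \Omega} Z_{k}(\omega)^{2}, \qquad \sum_{\omega \in \Omega} Z_{k}(\omega) = 0 .
\]
Next I would fix an arbitrary outcome $\omega^{*} \in \Omega$ and bound $Z_{k}(\omega^{*})^{2}$ in terms of $k\sigma^{2}$ by combining these two identities: the mean‑zero constraint forces the remaining $H-1$ values to compensate for $Z_{k}(\omega^{*})$, and feeding this back into the sum‑of‑squares identity produces an extremal inequality of Samuelson--Popoviciu type for a finite data set. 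Pushing that estimate through, and then using $k \le H$ to absorb the sample‑size factor, should give $|Z_{k}(\omega^{*})| \le \sqrt{k}\,\sigma$; since $\omega^{*}$ was arbitrary, this is the desired probability‑one bound. Running the same argument for each $k \in \{1,\dots,H\}$ — legitimate because Lemma~\ref{Z} supplies the representation $X_{t_{k}} - X_{t_{0}} = Z_{k}$ for every such $k$ — finishes the proof.

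The step I expect to be the main obstacle is precisely this last extremal estimate. For a genuine Gaussian the inequality $|Z_{k}| \le \sqrt{k}\,\sigma$ fails with positive probability, so the argument must rest essentially on the hypothesis that $\Omega$ is finite with $|\Omega| = H$ (and on reading $\sigma$ as the empirical standard deviation of the sampled values); the real work is getting the extremal constant down to exactly $1$ rather than the $\sqrt{H-1}$ that the naive Samuelson bound yields, which is where the constraint $k \le H$ built into the statement has to be exploited.
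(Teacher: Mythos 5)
The obstacle you isolate at the end --- improving the extremal constant from $\sqrt{H-1}$ to $1$ --- is not a technical difficulty to be worked around; it is the point at which the argument provably cannot be completed, because the statement itself fails for any non-degenerate $Z_{k}$. On a finite sample space with $|\Omega|=H$ and uniform weights, your identity $k\sigma^{2}=\frac{1}{H}\sum_{\omega\in\Omega}Z_{k}(\omega)^{2}$ says that $k\sigma^{2}$ is the \emph{average} of the numbers $Z_{k}(\omega)^{2}$, so $\max_{\omega}Z_{k}(\omega)^{2}\ge k\sigma^{2}$; that is, at least one outcome satisfies $|Z_{k}(\omega)|\ge\sqrt{k}\sigma$, with equality for every $\omega$ only in the degenerate case $|Z_{k}|\equiv\sqrt{k}\sigma$. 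Hence $P\bigl(|Z_{k}|\le\sqrt{k}\sigma\bigr)=1$ forces $Z_{k}$ to take only the values $\pm\sqrt{k}\sigma$, which is incompatible with $Z_{k}$ being even approximately normal. The Samuelson--Popoviciu constant $\sqrt{H-1}$ is sharp and attained, so it cannot be lowered to $1$, and the inequality $k\le H$ is of no help: it controls the size of the variance $k\sigma^{2}$, not the ratio of the worst deviation to the standard deviation, which is the quantity you would need to bound by $1$.

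For comparison, the paper's own proof takes a different and equally invalid route: it asserts, citing Shiryaev, that a non-negative random variable is bounded above by its integral with probability $1$, i.e.\ that $(X_{t_{k}}-X_{t_{0}})^{2}\le\int_{\Omega}(X_{t_{k}}-X_{t_{0}})^{2}\,dP=k\sigma^{2}$ almost surely. No such theorem exists; the correct statement in that direction is Markov's inequality, which at this threshold gives only the vacuous bound $P\bigl((X_{t_{k}}-X_{t_{0}})^{2}\ge k\sigma^{2}\bigr)\le 1$, and for a genuine Gaussian the event $|Z_{k}|>\sqrt{k}\sigma$ has probability about $0.32$. So your reduction to Lemma~\ref{Z} and your diagnosis of where the difficulty lies are both correct; the conclusion to draw is that the corollary is false as stated, not that a cleverer extremal inequality is missing. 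At best one can salvage a Chebyshev-type statement $P\bigl(|X_{t_{k}}-X_{t_{0}}|\le c\sqrt{k}\sigma\bigr)\ge 1-1/c^{2}$, which approaches $1$ only as $c\to\infty$.
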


\begin{proof}
Let $Var(X)$ denote the variance of the random variable, $X$.  Recalling that $\sigma^{2}$ is the variance of the historical error sequence and noting that the random variable $X_{t_{k}}-X_{t_{0}}$ is $0$-mean by lemma $(\ref{Z})$, then from eqs. $(\ref{X1})$, $(\ref{X2})$ and $(\ref{Z1})$, we have
\begin{eqnarray}
\int_{\Omega}(X_{t_{k}}-X_{t_{0}})^{2}dP &=& Var(X_{t_{k}}-X_{t_{0}}) \nonumber \\ &=& Var(Z_{k}) \nonumber \\ &=& k\sigma^{2}.
\end{eqnarray}
Since $(X_{t_{k}}-X_{t_{0}})^{2}$ is a non-negative random variable, then from Shiryaev $\cite{Shiryaev}$, we know that with probability 1,
\begin{equation}
(X_{t_{k}}-X_{t_{0}})^{2} \le \int_{\Omega}(X_{t_{k}}-X_{t_{0}})^{2}dP = k\sigma^{2}
\end{equation}
on $\Omega$.  Now, since $\{|X_{t_{k}}-X_{t_{0}}| \le \sqrt{k}\sigma\} = \{(X_{t_{k}}-X_{t_{0}})^{2} \le k\sigma^{2}\}$, then
\begin{equation}
P\bigg(|X_{t_{k}}-X_{t_{0}}| \le \sqrt{k}\sigma\bigg) = P\bigg((X_{t_{k}}-X_{t_{0}})^{2} \le k\sigma^{2}\bigg) = 1
\end{equation}
for all $k \in \{1,...,H\}$.\qed
\end{proof}

\begin{remark}
\label{shiryaev}
From Shiryaev $\cite{Shiryaev}$, if the chain $\{X_{t_{k}}\}_{k = 1}^{H}$ exhibits the Markov property, then it is independent of the start, $X_{t_{0}}=x_{0}$.
\end{remark}

\begin{remark}
If we run the Markov chain until a certain point, which we designate $X_{t_{0}} = x_{0}$, and sample $H$ times from the future, then by corollary $(\ref{growth})$, with probability $1$, we will not see growth beyond $x_{0} + \sqrt{H}\sigma$ nor will we see a decline below $x_{0} - \sqrt{H}\sigma$.
\end{remark}

\subsection{Check for the Markov Property}

The historical errors $\epsilon_{t_{k}}$ are assumed to be normally distributed for all $k \in \{1,...,K\}$.  Likewise, by the independence of each $\epsilon_{t_{k}}$ from each $\epsilon_{t_{j}}$, for all $k,j \in \{1,...,K\}$, when $k \ne j$, then the variance of the sum of the errors is just $K\sigma^{2}$.  Therefore, by remark $(\ref{error})$, we only need to show that $X_{t_{k}}-X_{t_{k-1}}$ is normally distributed, with $0$-mean and constant variance $\sigma^{2}$, for all $k \in \{1,...,K\}$, in order to show that the sequence of data measurements $\{X_{t_{k}}\}_{k=1}^{K}$ is Markovian, with respect to the chosen random model, $Y_{t_{k}}$, given in section $(\ref{specific})$.

Define a test statistic $W$ as
\begin{equation}
W = \frac{\bigg(\sum_{k=1}^{K}a_{k}x_{(k)}\bigg)^{2}}{\sum_{i=1}^{K}(x_{i}-\overline{x})^{2}},
\end{equation}
where $x_{(k)}$ and $\overline{x}$ are the $k^{th}$ element in an ordering of $\{x_{k}\}_{k = 1}^{K}$ and its sample mean, respectively, and $(a_{1},...,a_{K})$ is computed as
\begin{equation}
(a_{1},...,a_{K}) = \frac{m^{T}V^{-1}}{m^{T}V^{-1}V^{-1}m},
\end{equation}
such that $m = (m_{1},...,m_{K})$ is a vector of expected values of the order statistics used to give the ordering $\{x_{(k)}\}_{k=1}^{K}$ and $V$ is the covariance matrix of the order statistics.

\begin{definition}
\label{SW}
The \textbf{Shapiro-Wilk Test of Normality} is the test statistic $W$, such that, if a level of significance ($p$-value) is assigned in a hypothesis test, where the null hypothesis is that the sample was drawn from a normal distribution, then a value of $W$ which exceeds the probability $(1-2p)$ affirms the null hypothesis.
\end{definition}

\noindent
The Shapiro-Wilk test now provides a sufficient condition for testing if the sequence of errors $\{\epsilon_{t_{k}}\}_{k = 1}^{K}$, defined in our Markov model $\{Y_{t_{k}}\}_{k = 1}^{K}$, is normally distributed, which amounts to $\{X_{t_{k}}\}_{k=1}^{K}$ forming a Markov chain with respect to the model $\{Y_{t_{k}}\}_{k = 1}^{K}$.

\section{Airline Schedule Interruption Counts Exhibit the Markov Property}

\subsection{Problem Statement}

To a manufacturer of large airliners, a schedule interruption is any event that causes an airliner to be more than 15 minutes late on its scheduled departure time from an airport or more than 15 minutes late arriving into an airport due to mechanical or electrical failure of a part, subsystem or system on said aircraft.  Given a data set containing a $K$-month period of historical schedule interruption counts, we will present a calculation of bounds on the number of schedule interruptions in the following $H$-month future period, after which, we want to be able to calculate bounds on the total cost impact.

\subsection{Bounds on Schedule Interruption Counts}
\label{bounds}

Using def. $(\ref{SW})$, the errors obtained from a proprietary schedule interruptions data set gives a value of $W \approx 0.90$, which is right at the level of significance when we set $p = 0.05$.  We accept the null hypothesis and conclude that the sequence of errors was drawn from a normal distribution so that the original data set is Markovian, according to the model of $Y_{t_{k}}$ given by eq. $(\ref{YX})$.  Therefore, by remark $(\ref{shiryaev})$, we can run the chain up to the end and label this point $x_{0}$.  Then, with probability $1$, future schedule interruption counts will not increase beyond $x_{0} + \sqrt{H}\sigma$ nor decrease below $x_{0} - \sqrt{H}\sigma$, where $H$ is the number of future data points and $\sigma^{2}$ is the variance of the past data points, up to $x_{0}$.

\subsection{Model of Cost Impact Due to Schedule Interruptions}

Using lemma $(\ref{X1})$ and eq. $(\ref{X2})$ in section $(\ref{app})$, we see that the future interruption counts are the sum of the last interruption count plus $0$-mean, white noise with variance, $H\sigma^{2}$, obtained in section $(\ref{bounds})$.  By the Markov property, also shown in section $(\ref{bounds})$, noise associated with future schedule interruption counts is $0$-mean with respect to $x_{0}$, so that we have a normal distribution with mean given by a horizontal line extending from $x_{0}$, of length exactly $H$ months.  To complete the future data space, we have the familiar ``bell'' shape with standard deviation $\sqrt{k}\sigma$, for $k \in \{1,2,...,H\}$ corresponding to each month in the $H$-month future time span, which extends beyond the end of the historical data set.

\subsubsection{Average Monthly Cost Per Schedule Interruption}
\label{avgcost}

With an application of the Central Limit Theorem, we can make the assumption of an approximately normal distribution for the total cost impact due only to delays $(D)$, cancellations $(C)$, diversions $(d)$ and air-turn-backs $(A)$.  Hence, using a maximum likelihood technique, we see that the best estimate of the true mean of the distribution is given by the average cost impact for the sum of the historical counts of the different delay classes. As such, we first calculate the \textbf{delay class average monthly cost impact (ADC)} per interruption for a $K$-month historical period as

\begin{equation}
ADC = \frac{\sum_{k=1}^{K}\bigg((C_{D}D_{k} + C_{C}C_{k} + C_{d}d_{k} + C_{A}A_{k})/(D_{k}+C_{k}+d_{k}+A_{k})\bigg)}{K},
\end{equation}
where $C_{D}$, $C_{C}$, $C_{d}$ and $C_{A}$ are the average costs associated with delays, cancellations, diversions and air-turn-backs.

Making another appeal to the Central Limit Theorem for the distribution of total cost impact due only to spares replacements $(S)$, we next calculate an estimate of the mean of the distribution as

\begin{equation}
ASC = \frac{\sum_{k=1}^{K}\bigg(C_{S}S_{k}/(D_{k}+C_{k}+d_{k}+A_{k})\bigg)}{K},
\end{equation}
where $ASC$ is the \textbf{average monthly spares cost} per interruption.  Thus, our average monthly costs per schedule interruption in the $K$-month period is estimated to be the sum $ADC+ASC$.  Now, by corollary $(\ref{growth})$, for each month in the $H$-month period beyond the end of the historical data set, our \textbf{future average cost impact} due to schedule interruptions is bounded below by $(x_{0}-\sqrt{k}\sigma)*(ADC+ASC)$ and bounded above by $(x_{0}+\sqrt{k}\sigma)*(ADC+ASC)$ for each $k \in \{1,2,...,H\}$.  By corollary $(\ref{growth})$, with probability 1, these costs bound our future, $H$-month total cost impact to provide the familiar ``bell'' shape of our normally distributed future data set, by lemma $(\ref{Z})$.

\subsubsection{Cost Prediction of Schedule Interruption Counts}

Now that we have our average schedule interruption costs bounded with probability 1, as stated in section $(\ref{avgcost})$, a prediction of future average costs per schedule interruption can be obtained by sampling from the normal distribution whose mean and variance are given by $x_{0}*(ADC+ASC)$ and $k\sigma^{2}*(ADC+ASC)$, respectively, for each $k \in \{1,...,H\}$.


\begin{thebibliography}{99}

\bibitem{Hogg}
Hogg, R.V., McKean, J., Craig, A.T. (2012),
{\it Introduction to Mathematical Statistics ($7^{th}$ Edition)},
Pearson.

\bibitem{Shapiro}
Shapiro, S.S., Wilk, M.B. (1965),
{\it An Analysis of Variance Test for Normality (Complete Samples)},
{\it Biometrika, Volume 52, pp. 591 - 611, 1965}

\bibitem{Shiryaev}
Shiryaev, A.N. (1996),
{\it Probability},
Springer.

\end{thebibliography}
\end{document}